\documentclass[pdflatex,sn-mathphys-ay]{sn-jnl}

\usepackage{amsmath, amssymb, amsthm}
\usepackage{natbib}
\usepackage{float}
\usepackage{graphicx}
\usepackage{booktabs}
\usepackage{multirow}
\usepackage{algorithm}
\usepackage{algorithmicx}
\usepackage{algpseudocode}
\usepackage{listings}
\usepackage{xcolor}
\usepackage{tikz}
\usepackage{pgfplots}
\pgfplotsset{compat=1.18}
\pgfplotsset{compat=newest}
\usepgfplotslibrary{groupplots}
\usetikzlibrary{positioning}
\usepackage{tabularx}
\usepackage{enumitem}
\usepackage{hyperref}
\usepackage{cleveref}

\usepackage{csquotes}
\usepackage{microtype}
\usepackage{placeins}
\usepackage{xurl}
\definecolor{c01}{HTML}{0C9756}
\definecolor{c02}{HTML}{30602D}
\definecolor{c03}{HTML}{096A9A}
\definecolor{c04}{HTML}{0B4199}
\definecolor{c05}{HTML}{F8850F}
\definecolor{c06}{HTML}{E75E2E}
\definecolor{c07}{HTML}{CB4149}
\definecolor{c08}{HTML}{A92E5E}
\definecolor{c09}{HTML}{84206B}
\definecolor{c10}{HTML}{5D126E}
\definecolor{c11}{HTML}{360961}
\definecolor{c12}{HTML}{11092E}

\theoremstyle{definition}
\newtheorem{definition}{Definition}[section]
\newtheorem{remark}{Remark}[section]

\theoremstyle{plain}
\newtheorem{theorem}{Theorem}[section]
\newtheorem{proposition}{Proposition}[section]

\newcommand{\QEP}{\mathcal{P}}
\newcommand{\QEPF}{\mathcal{P}(u)}
\newcommand{\Vitality}{\mathrm{V}}
\newcommand{\MRQ}[1]{\mathrm{M}(#1)}     

\title[Quantile-Based Effectiveness Persistence]{Quantile-Based Effectiveness Persistence Function: A Tail-Focused Metric with Theory, Estimation, and Application to Biosimilar Evaluation}

\author[1]{\fnm{Sankaran} \sur{P.G.}}
\author*[1]{\fnm{Prasanth} \sur{V.P.}}\email{prasanth.stat@gmail.com}
\author[2]{\fnm{Midhu} \sur{N.N.}}

\affil[1]{\orgdiv{Department of Statistics}, \orgname{Cochin University of Science and Technology}, \orgaddress{\city{Kochi}, \country{India}}}
\affil[2]{\orgdiv{Biostatistics}, \orgname{IQVIA}, \orgaddress{\city{Kochi}, \country{India}}}

\abstract{
In clinical studies, persistence, which measures the duration of time a patient continues to take a prescribed medication without discontinuation, is increasingly recognized as a critical indicator of adherence to medication. Adherence encompasses not only whether a patient takes their medication as prescribed but also the consistency and duration with which they do so. Among the various metrics used to evaluate adherence, persistence stands out as a particularly robust measure because it provides a temporal dimension, reflecting the sustained commitment of patients to their therapeutic regimens. This focus on persistence offers unique insights into adherence-related quality and performance, shedding light on the challenges and opportunities to optimize long-term medication use.   

The comparison of upper-tail clinical performance, which measures the extent to which very large responses persist among top responders, is often more decisive in therapy
evaluation than conventional summaries. In this paper, we introduce the quantile-based effectiveness persistence function defined  as the ratio between the tail mean  and the quantile function. The notion parallels expected shortfall in risk theory and is tailored to detect clinically meaningful deviations in the upper tail. We establish key properties and show that the function is equivalent to the first L-moment of the scaled tail, yielding robust inference tools. We derive a simple nonparametric estimator of the function and develop a bootstrap-calibrated two-sample (upper-tail) equivalence test. Simulation studies and real-data analysis illustrate that the proposed measures captures clinically relevant tail persistence that complements median and mean-based summaries.
}

\keywords{quantile function, tail mean, vitality, mean residual life, TTT-transform, L-moments, regular variation, non-inferiority, biosimilarity, bootstrap uniform bands}

\begin{document}
\maketitle

\section{Introduction}
\label{sec:intro}

In clinical research, biosimilars represent a growing area and face significant challenges, especially in emerging markets. The goal of a biosimilar development program is not to independently establish that the proposed biosimilar product is safe and effective, but rather to demonstrate that the proposed biosimilar product is highly similar to the reference biologic product. Biosimilar trials employ several statistical tools to demonstrate clinical similarity under an equivalence or non-inferiority framework \citep{ICH_E9R1_2019}.
In the equivalence framework, two-sided confidence intervals for differences or ratios between means of two populations are generally used for comparison \citep{Wellek2010}. In non-inferiority trials, one-sided confidence intervals for the difference of means, excluding the non-inferiority margin, are often used as supporting evidence for decision-making \citep{FDA2016}. When binary responses are collected, the risk ratio, risk difference, or odds ratio is compared to an equivalence margin. For pharmacokinetic (PK) bioequivalence, geometric mean ratios with 90\% confidence intervals within the 80-125\% range are required \citep{Schuirmann1987, ChowLiu2013}.
\medskip

The analysis of time-to-event data is commonly performed using the well-known concept of the hazard ratio. In the presence of covariates, landmark models are more useful when the proportional hazards assumption is doubtful \citep{Uno2014}. Comparison of clinical performance in the upper-tail, which measures the extent to which very large responses persist among the highest responders, is often more decisive in therapy evaluation than conventional summaries.
Existing statistical concepts mentioned above rarely quantify persistence among extreme responders. Traditional tools-such as means, hazard ratios, response rates, and survival times often assume proportional hazards or rely on arbitrary thresholds, and they can miss durable benefits among top responders. Several approaches have been proposed to describe upper-tail behavior in survival analysis, including mean residual life (MRL), tail mean or expected shortfall, Bonferroni and Lorenz curves, total time on test (TTT) transforms, and fixed-threshold survival metrics. MRL and tail mean are scale-dependent and unstable under heavy censoring, with limited clinical interpretability \citep{KleinMoeschberger2003, GuptaBradley2003,Emura2021}. Expected shortfall requires arbitrary tail-level choices and is sensitive to extremes \citep{Embrechts1997}. Bonferroni and Lorenz curves offer unique perspectives but lack direct clinical meaning and can be influenced by outliers \citep{Arnold2015}. TTT transforms serve as diagnostic tools in reliability theory rather than treatment-effect measures \citep{Bergman1981, KleinMoeschberger2003}.
\medskip

A probability distribution of a random variable can be represented either by a distribution function or by a quantile function. There are certain properties of the quantile function that are not shared by the distribution function \citep{NairSankaranDileepkumar2022}. In this paper, we introduce and study the quantile-based effectiveness-persistence function (QEPF) using the quantile function. Quantile-based methods have gained traction in survival analysis for their robustness and interpretability \citep{Peng2021, Kayid2024}. Recent advances in bootstrap and simultaneous inference for quantile processes \citep{Chernozhukov2016, Chernozhukov2022, Zhou2024, Hesterberg2015} enable rigorous uncertainty quantification for tail-focused measures like QEPF. The role of the quantile function and concepts derived from it is well studied in exploratory data analysis \citep{Parzen1979}. With heavy-tailed probability models, a single long-term survivor can have a marked effect on many survival measures based on the distribution function. In such cases, quantile-based measures are useful because they are less influenced by extreme observations. Furthermore, they provide direct analysis with limited information. For various properties and applications of quantile functions, one could refer to \citep{Gilchrist2000} and \citep{Nair2013}.
\medskip

The text is organized as follows: Sections \ref{sec:def} and \ref{sec:prop} introduce the quantile-based effectiveness persistence function and study its properties. The relationship with other survival measures is also discussed. Section \ref{sec:beta} details the persistence and stationary points specific to Beta distribution. A non-parametric estimator of the function is derived in Section \ref{sec:empirical-qepf}. Section \ref{sec:2sample_eq_test-qepf} propose a two-sample equivalence test based on QEPF. Simulation studies were conducted to assess the performance of the estimator and  equivalence test, which is presented in Section \ref{sec:simstudy}. The proposed equivalence test is applied to real-life data in Section \ref{sec:realdata}. Finally, the section \ref{sec:discussion-conclusion} summarizes the major findings of the work.

\section{Definition of Quantile-Based Effectiveness Persistence Function}
\label{sec:def}
Let $X$ be a continuous positive random variable with distribution function $F(x)$, and quantile function, $Q(u)=\inf\{x:F(x)\ge u\}$, $u\in(0,1)$. The vitality function of $X$ is defined as $v(x) \;=\; \mathbb{E}\!\left[X \,\middle|\, X > x\right]$ which measure mean of the random variable $X$ truncated at $x$. Then the \textit{quantile-based vitality function} is given by \citep{Nair2013}:
\[
\Vitality(u)\equiv \frac{1}{1-u}\int_u^1 Q(p)\,dp,\qquad u\in(0,1),
\]
which is a quantile-integrated tail mean, closely related to Expected Shortfall in risk theory \citep{AcerbiTasche2002,RockafellarUryasev2002}.

In finance, for a loss variable \(L\) and confidence level \(\alpha\in(0,1)\), the ratio \(\mathrm{ES}_\alpha(L)/\mathrm{VaR}_\alpha(L)\)---where \(\mathrm{VaR}_\alpha(L)=\inf\{x\in\mathbb{R}:\Pr(L\le x)\ge \alpha\}\) and (for continuous \(L\)) \(\mathrm{ES}_\alpha(L)=\mathbb{E}[L\,|\,L\ge \mathrm{VaR}_\alpha(L)]\)---summarizes tail severity relative to its entry threshold \citep{McNeil2005}.

We adapt this concept to a \emph{benefit} context by defining the Quantile-Based Effectiveness Persistence Function, $\QEPF$ as follows:

\begin{definition}
\label{def:QEPF}
For $u\in(0,1)$,
\begin{equation}
\label{eq:defQEPF}
\QEPF \equiv \frac{\Vitality(u)}{Q(u)}=\frac{1}{1-u}\int_u^1 \frac{Q(p)}{Q(u)}\,dp.
\end{equation}
\end{definition}

$\QEPF$ is interpreted as the \emph{average multiplicative uplift} among the top $(1-u)$ fraction relative to the entry threshold $Q(u)$, and since $Q(p)$ is strictly increasing for a continuous distribution, it follows that $\QEPF>1$ always. Moreover, $\lim_{u\uparrow 1}\QEPF=1$. $\QEPF$ for common continuous distributions are given in Table \ref{tab:qepf-summary}, and Figure \ref{fig:qepf-panels} shows the plot of $\QEPF$ for selected distributions.

\section{Properties of \texorpdfstring{$\QEPF$}{QEPF}}
\label{sec:prop}
\noindent The following properties are directly observed for $\QEPF$:

\begin{enumerate}
    \item \textbf{Scale invariance:} For $Y=aX$ with $a>0$, the measure remains unchanged under scaling, i.e., $\mathcal{P}_Y(u)=\mathcal{P}_X(u)$.

    \item \textbf{Location sensitivity:} For a shift $Y=X+b$ with $b>0$, the function is defined by $\mathcal{P}_Y(u)=\frac{V(u)+b}{Q(u)+b}$. As the shift increases, the measure attenuates toward unity, such that $\lim_{b \to \infty} \mathcal{P}_Y(u) = 1$.
    
    \item \textbf{Lower bound:} By construction, $\QEPF>1$ for all $u\in(0,1)$.

    \item \textbf{Relationship with Mean Residual Quantile:} The mean residual quantile function which is quantile version to mean residual life is given by $M(u) \;=\; \frac{1}{1-u}\int_{u}^{1}\!\bigl(Q(p)-Q(u)\bigr)\,dp,$. 0<u<1. \citep{Nair2013}. Then, we can show that,
    \begin{equation}\label{eq:pID1}
    \QEPF = 1+\frac{M(u)}{Q(u)}
    \end{equation}
    
\begin{proof}
\label{proof:mrq}
We have, 
$$\mathrm{M}(u)=\frac{1}{1-u}\int_u^1 \big(Q(p)-Q(u)\big)\,dp$$
then, $$\frac{M(u)}{Q(u)} = \frac{1}{(1-u)Q(u)}\int_u^1 \big(Q(p)-Q(u)\big)\,dp$$
$$\frac{M(u)}{Q(u)} = \frac{1}{(1-u)}\int_u^1\frac{ \big(Q(p)-Q(u)\big)}{Q(u)}\,dp = \QEPF-1$$
\end{proof}
    
    \item \textbf{Pareto characterization:} $\QEPF$ is constant in $u$ if and only if ($\iff$) the underlying distribution is Pareto I with quantile function $Q(u)=\sigma(1-u)^{-1/\alpha}$, $\alpha>1$, where $\sigma > 0 \text{ and } \alpha > 1$ \citep{Nair2013, MarshallOlkin2007}. In this case,
    \[
    \QEPF = \frac{\alpha}{\alpha-1}.
    \]
    For $\alpha\le 1$, the integral diverges, reflecting infinite mean and divergent $\QEPF$.
    
\begin{proof}
For $Q(u) = \sigma (1-u)^{-1/\alpha}$ with $\alpha>1$, the ratio $Q(p)/Q(u)=\{(1-p)/(1-u)\}^{-1/\alpha}$ yields
\[
\QEPF = \int_0^1 t^{-1/\alpha} dt = \frac{\alpha}{\alpha - 1}.
\]
Conversely, constancy implies a functional equation for $Q$ whose solution is $Q(u)=c(1-u)^{-1/\alpha}$.
\end{proof}

    %
\end{enumerate}

\begin{remark}
When $\QEPF$ remains constant (as in Pareto I with $\alpha>1$), it indicates that the tail of the distribution grows in a self-similar way. In practical terms, this means that the relative increase in extreme values stays consistent across higher quantiles. This pattern is particularly relevant in clinical trials for diseases or endpoints with heavy-tailed survival distributions, such as certain oncology or rare disease settings where a small subset of patients experiences very long survival. This property highlights that average equivalence does not guarantee tail equivalence, which is a critical insight for biosimilarity and risk assessment.
\end{remark}


\begin{table}[ht]
\setlength{\extrarowheight}{9pt}
\caption{Quantile function $ Q(u)$ and persistence function $\QEPF$ across continuous distributions.}
\label{tab:qepf-summary}
\centering
\raggedright
\begin{tabularx}{\textwidth}{>{\raggedright\arraybackslash}p{3cm} >{\raggedright\arraybackslash}p{3.9cm} X}
\toprule
\textbf{Distribution} & \textbf{$Q(u)$} & \textbf{$\QEPF$} \\
\midrule
Uniform $(0,1)$ & $u$ & $\dfrac{1+u}{2u}$ \\
Exponential $(\lambda)$ & $-\dfrac{1}{\lambda}\log(1-u)$ & $1 - \dfrac{1}{\log(1-u)}$ \\
Log-Logistic $(\alpha,\beta)$ \citep{Lawless2003,JohnsonKotz1994}
& $\alpha\!\left(\dfrac{u}{1-u}\right)^{1/\beta}$
&
$ \begin{aligned}[t]
&u^{-1/\beta}(1-u)^{\frac{1}{\beta}-1}\, B\!\Big(1+\tfrac{1}{\beta},\,1-\tfrac{1}{\beta}\Big)\, \\
&\Big[1-I_u\!\Big(1+\tfrac{1}{\beta},\,1-\tfrac{1}{\beta}\Big)\Big]
\end{aligned} $
\\
Power $(\alpha,\beta)$ \citep{NadarajahKotz2006}
& $\alpha\,u^{1/\beta}$
& $\dfrac{\beta\big(1-u^{1+1/\beta}\big)}{(1-u)\,u^{1/\beta}\,(\beta+1)}$ \\
Weibull $(\lambda,k)$ \citep{Lawless2003,JohnsonKotz1994}
& $\frac{1}{\lambda}\{-\log(1-u)\}^{1/k}$
& $\dfrac{e^{t_u}}{t_u^{1/k}}\,\Gamma\!\big(1+\tfrac{1}{k},\,t_u\big),\ \ t_u=-\log(1-u)$ \\
Gamma $(\theta,k)$ \citep{Lawless2003,JohnsonKotz1994}
& $Q(u)=\theta\,G^{-1}(u;k)\footnote{$G^{-1}(u;k)$ denotes the inverse CDF (quantile function) of the Gamma distribution with shape parameter $k$ and unit scale.}$ 
& $\displaystyle \frac{\Gamma\!\big(k+1,\,t_u\big)}{(1-u)\,t_u},\ \ t_u={Q(u)}\footnote{$\Gamma(a,x)$ denotes the upper incomplete gamma function $\int_x^\infty t^{a-1}e^{-t}\,dt$.}$ 
\\
LMRQD \citep{Midhu2013}
& $-(\alpha+\mu)\log(1-u) - 2\alpha u$
& $1 + \dfrac{\mu + \alpha u}{-(\alpha + \mu)\log(1-u) - 2\alpha u}$ \\
\bottomrule
\end{tabularx}
\end{table}

\begin{figure}[ht]
\centering
\includegraphics[width=0.99\linewidth]{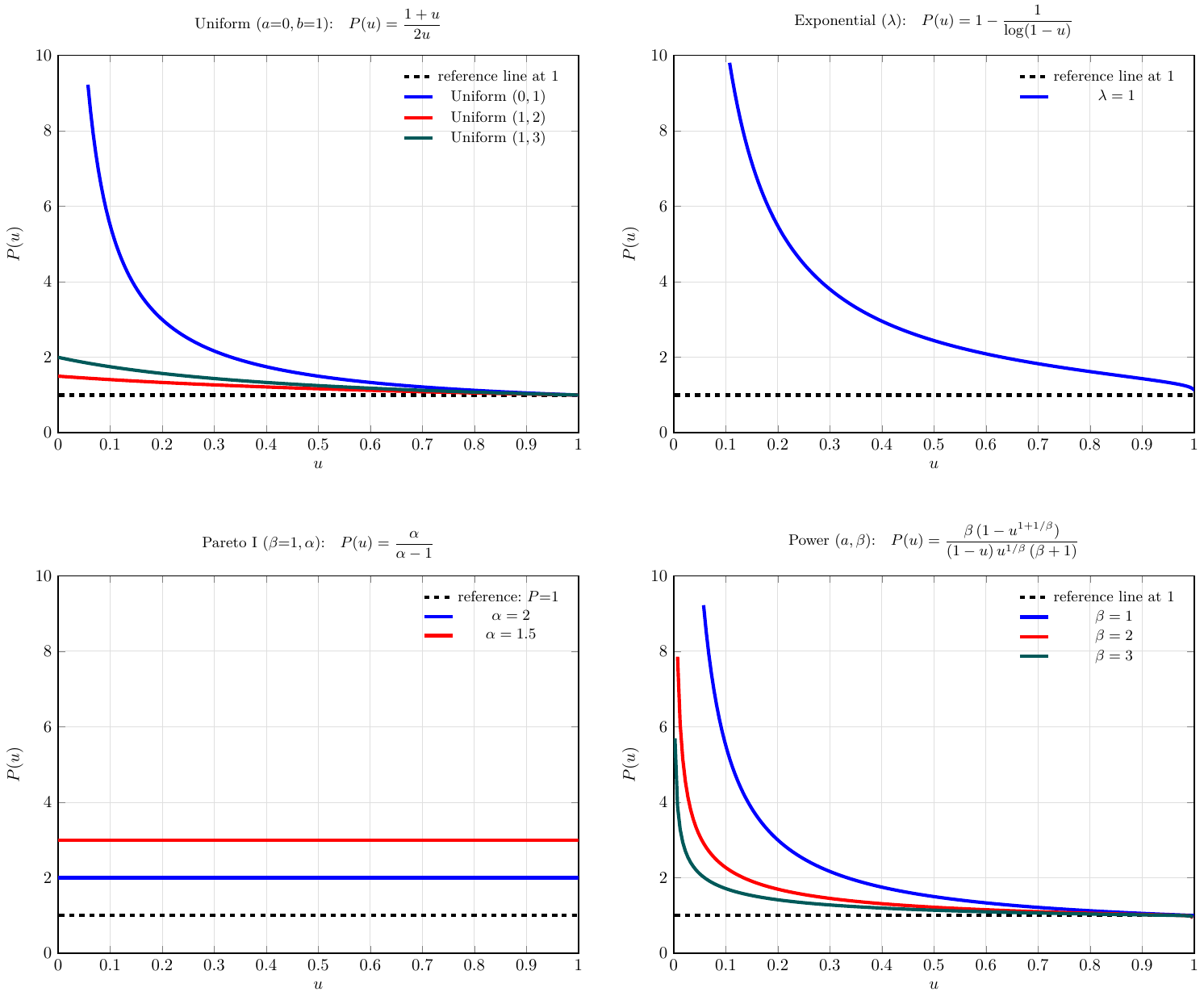}
\caption{Plot of $\QEPF$ for common lifetime distributions.} 
\label{fig:qepf-panels}
\end{figure}


\subsection{Relationship to classical quantile-based measures}
\label{rem:relationship}

In comparative effectiveness research, and especially in the evaluation of biosimilar products, regulatory frameworks emphasize demonstrating clinical similarity to a reference therapy while preserving efficacy and safety in relevant patient subgroups \citep{FDA_Biosim_2015, EMA_Biosim_2014, Chow2013}. Because very favorable outcomes (e.g., longest survival, fastest recovery, largest biomarker reduction) are often concentrated in the upper tail, an explicitly tail-focused metric complements mean, median, or fixed-threshold summaries. 

The quantile-based effectiveness persistence function, $\QEPF$, which measures the average multiplicative uplift among responders above the entry threshold $Q(u)$, thereby quantify how strongly large benefits persist in the top $(1-u)$ fraction.

Beyond its clinical interpretability, $\QEPF$ admits two canonical identities that anchor it within well-known quantile frameworks. 
First, the \emph{connection via the Lorenz curve} relates $\QEPF$ to the upper-tail share of aggregate benefit, tying persistence directly to inequality concentration \citep{Gastwirth1971}. 
Second, the \emph{connection via the Total Time on Test (TTT) transform} expresses $\QEPF$ on the quantile scale in terms of accumulated durability beyond percentile $u$, making tail persistence operational through reliability concepts \citep{BergmanKlefsjo1984,Aarset1987,NairSankaranKumar2008,Nair2013}. 
Together, these links place $\QEPF$ within familiar inequality and reliability toolkits while preserving a tail-local, scale-free interpretation suited to biosimilar evaluations \citep{KleinMoeschberger2003}.




\begin{proposition}[Expression via the Lorenz curve]\label{prop:lorenz}
Let $X$ be a nonnegative outcome with quantile function $Q(u)$ and finite mean
$\mu=\int_0^1 Q(p)\,dp$. Define the Lorenz curve
\[
L(u)=\frac{1}{\mu}\int_0^u Q(p)\,dp,\qquad u\in[0,1]
\]
\citep{Gastwirth1971}. Then
\[
\QEPF \;=\; \frac{\mu\{1-L(u)\}}{(1-u)\,Q(u)}\,.
\]
\end{proposition}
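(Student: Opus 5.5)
The identity follows by rewriting the tail integral in the numerator of $\Vitality(u)$ using the Lorenz curve, then substituting into Definition~\ref{def:QEPF}.

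\textbf{Step 1 (splitting the mean).} Since $\mu=\int_0^1 Q(p)\,dp$ is finite and $Q\ge 0$, the integral over $[0,1]$ splits as
\[
\int_u^1 Q(p)\,dp=\mu-\int_0^u Q(p)\,dp .
\]
Finiteness of $\mu$ also guarantees that $\int_u^1 Q(p)\,dp<\infty$, so $\Vitality(u)$ is well defined.

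\textbf{Step 2 (bringing in the Lorenz curve).} By the definition of $L(u)$ we have $\int_0^u Q(p)\,dp=\mu L(u)$. Hence
\[
\int_u^1 Q(p)\,dp=\mu\{1-L(u)\}.
\]

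\textbf{Step 3 (substituting into the QEPF).} Inserting this into \eqref{eq:defQEPF} gives
\[
\QEPF=\frac{1}{(1-u)Q(u)}\int_u^1 Q(p)\,dp=\frac{\mu\{1-L(u)\}}{(1-u)\,Q(u)},
\]
which is the claimed identity.

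The only point needing care is that $Q(u)>0$ for $u\in(0,1)$, so that the division is legitimate. This holds because $X$ is a continuous positive random variable, as in the setting of Section~\ref{sec:def}. If $X$ is merely assumed nonnegative, the statement should be read for those $u$ with $Q(u)>0$. There is no substantive obstacle beyond this: the proof is an additive splitting of $\mu$ followed by substitution.
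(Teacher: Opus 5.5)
Your proof is correct and is the natural argument. The paper states this proposition without a written proof, and rewriting $\int_u^1 Q(p)\,dp=\mu-\mu L(u)=\mu\{1-L(u)\}$ and substituting into the definition of $\mathcal{P}(u)$ is exactly the immediate identity it rests on. Your caveat that $Q(u)>0$ is needed is appropriate, since the proposition assumes only that $X$ is nonnegative; likewise $\mu>0$ is needed for $L$ to be defined.
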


\noindent
Here $1-L(u)$ is the upper-tail share of aggregate benefit, so the ratio expresses tail persistence as an inequality-adjusted average uplift, which is useful for reading concentration of benefit among high responders in biosimilar evaluations.


\begin{proposition}[Expression via Total Time on Test]\label{prop:ttt}
On the quantile scale, define the Total Time on Test (TTT) transform
\[
T(u)=\int_0^u (1-p)\,Q'(p)\,dp,\qquad T(1)=\int_0^1 Q(p)\,dp = \mu
\]
\citep{BergmanKlefsjo1984,Aarset1987,Nair2013}. Then
\[
\QEPF \;=\; 1 + \frac{T(1)-T(u)}{(1-u)\,Q(u)}\,.
\]
\end{proposition}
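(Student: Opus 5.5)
Via identity \eqref{eq:pID1}, it suffices to show
\[
T(1)-T(u)=\int_u^1 \bigl(Q(p)-Q(u)\bigr)\,dp=(1-u)\,\mathrm{M}(u).
\]
Dividing by $(1-u)Q(u)$ and adding $1$ then gives the stated expression.

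By definition, $T(1)-T(u)=\int_u^1 (1-p)\,Q'(p)\,dp$. I will integrate by parts on $[u,b]$ for $b<1$, differentiating $1-p$ and integrating $Q'$. Choosing the antiderivative $Q(p)-Q(u)$, which vanishes at $p=u$, this gives
\[
\int_u^b (1-p)\,Q'(p)\,dp=(1-b)\bigl(Q(b)-Q(u)\bigr)+\int_u^b \bigl(Q(p)-Q(u)\bigr)\,dp .
\]
Letting $b\uparrow 1$, the integral on the right converges to $(1-u)\mathrm{M}(u)$, which is finite because $\mu<\infty$.

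The main obstacle is the boundary term $(1-b)Q(b)$, which must tend to $0$ as $b\uparrow 1$. I will prove this from the finiteness of the mean. Since $Q$ is nondecreasing,
\[
(1-b)\,Q(b)\le \int_b^1 Q(p)\,dp \longrightarrow 0 \quad (b\uparrow 1).
\]
The left side is nonnegative for $b$ near $1$, so the term vanishes. We also have $(1-b)Q(u)\to 0$ trivially.

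For the integration by parts I assume $Q$ is absolutely continuous, as is implicit in writing $Q'$. The same computation on $[0,1]$ with boundary term $Q(0)\ge 0$ shows that $T(1)=\mu$ exactly when $Q(0)=0$; otherwise $T(1)=\mu-Q(0)$. This discrepancy does not matter here, because only the difference $T(1)-T(u)$ enters the formula.

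Alternatively, one could use Fubini directly. Writing $Q(p)-Q(u)=\int_u^p Q'(s)\,ds$ and swapping the order of integration gives
\[
\int_u^1\!\int_u^p Q'(s)\,ds\,dp=\int_u^1 (1-s)\,Q'(s)\,ds ,
\]
which is valid by Tonelli since $Q'\ge 0$. This route avoids the boundary term altogether, so I would likely present it as the main argument.
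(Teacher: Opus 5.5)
The paper states this proposition without proof; it is followed only by an interpretive remark. So there is no authors' argument to compare against. Your proof is correct and complete under the absolute-continuity assumption you make explicit.

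The short argument suggested by the paper's normalization $T(1)=\mu$ would use the closed form $T(u)=\int_0^u Q(p)\,dp+(1-u)Q(u)$, subtract, and simplify. That closed form holds only when $Q(0^{+})=0$: integrating by parts from $0$ actually gives $T(u)=\int_0^u Q(p)\,dp+(1-u)Q(u)-Q(0^{+})$.

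Your route avoids this. You work only with the difference $T(1)-T(u)=\int_u^1(1-p)Q'(p)\,dp$, identify it with $(1-u)\MRQ{u}$ through Tonelli, and let \eqref{eq:pID1} finish. This needs neither $Q(0^{+})=0$ nor any boundary estimate. Your side remark is a real correction to the statement's claim $T(1)=\mu$, not a technicality. For the Pareto I law in the paper, $T(1)=\sigma/(\alpha-1)=\mu-\sigma\neq\mu$, yet the displayed formula for $\QEPF$ still correctly returns $\alpha/(\alpha-1)$.

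Your integration-by-parts variant is also sound. The bound $(1-b)Q(b)\le\int_b^1 Q(p)\,dp$ is exactly what kills the upper boundary term.

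One addition: absolute continuity of $Q$ on compact subintervals of $(0,1)$ is genuinely needed, not just notational. If the support of $X$ has a gap, $Q$ jumps and $\int Q'$ misses the jump. The identity then fails unless $T$ is read as a Stieltjes integral against $dQ$.
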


\noindent
$T(1)-T(u)$ accumulates durability beyond percentile $u$, and normalization by $(1-u)Q(u)$ scales by tail mass and the entry threshold; differences in $T$ between arms diagnose preservation of durability among top responders \citep{KleinMoeschberger2003}. Higher-order TTT extensions appear in \citep{NairSankaranKumar2008}.



\subsection{L-moments of persistence}
\label{subsec:lmom_persistence}

L-moments are expectations of linear combinations of order statistics with superior small-sample stability and outlier robustness \citep{Hosking1990,HoskingRC1990}.

Let $X_{1:r} \le \cdots \le X_{r:r}$ be the order statistics from an i.i.d.\ sample of size $r$ from the distribution of $X$. The $r^{th}$ L-moment is defined as \citep{Hosking1990,HoskingRC1990}
\[
\lambda_r \;=\; \frac{1}{r}\sum_{k=0}^{r-1}(-1)^k \binom{r-1}{k}\, \mathbb{E}\!\left[X_{\,r-k:r}\right],
\qquad r=1,2,3,\ldots
\]

Let $Q(u)=F^{-1}(u)$ denote the quantile function of $X$. The $r^{th}$ L-moment in quantile terms can be expressed as \citep{Nair2013}
\[
\lambda_r \;=\; \int_{0}^{1} Q(u)\,
\bigg\{\sum_{k=0}^{r-1} (-1)^{\,r-1-k}
\binom{r-1}{k}\binom{r-1+k}{k}\, u^{k}\bigg\}\,du,
\qquad r=1,2,3,\ldots
\]

L-moments are defined for any random variable with a finite mean, and finiteness of $\lambda_1$ implies finiteness of all $\lambda_r$ ($r\ge1$) \citep{Hosking1990,HoskingRC1990}. They generally have lower sampling variances. Like conventional moments, L-moments can be used as summary measures of probability distributions (and samples), to identify distributions, and to fit models to data \citep{Nair2013}.

To connect L-moments with persistence in the upper tail, fix a threshold $t>0$ and define the \emph{scaled tail} (normalized excess) random variable
\[
Y \;=\; \frac{X}{t}\,\bigg|\, (X>t),
\]
which represents the behavior of $X$ beyond $t$ in units of $t$. This construction is related to the conditional excess framework in extreme value theory \citep{BalkemaDeHaan1974,Embrechts1997}. In the following theorem, we show that the first L-moment of $Y$ (the excess random variable) equals $\QEPF$.


\begin{theorem}[First L-moment of normalized excess]
\label{thm:firstlmom}
For $t = Q(u)$ with $u \in (0,1)$, the first L-moment of $Y$ is
\[
\lambda_1^*(u) = \frac{1}{(1-u)Q(u)} \int_u^1 Q(p)\, dp =: \QEPF.
\]
\end{theorem}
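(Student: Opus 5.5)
The plan is to reduce the first L-moment of $Y$ to its mean and then compute that mean on the quantile scale. From the definition of $\lambda_r$ with $r=1$, the sum has the single term $k=0$, so $\lambda_1^*(u)=\mathbb{E}[Y_{1:1}]=\mathbb{E}[Y]$. Equivalently, in the quantile form the kernel for $r=1$ is identically $1$, so $\lambda_1^*(u)=\int_0^1 Q_Y(s)\,ds$, where $Q_Y$ is the quantile function of $Y$. The whole proof is therefore the identification of $Q_Y$, followed by one change of variables.

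\textbf{Step 1: the quantile function of $Y$.} Fix $u\in(0,1)$ and put $t=Q(u)>0$. Since $X$ is continuous, $F(t)=F(Q(u))=u$ and $\Pr(X>t)=1-u>0$, so conditioning on $\{X>t\}$ is well defined. The conditional distribution function of $X$ given $X>t$ is
\[
F_t(x)=\frac{F(x)-u}{1-u},\qquad x\ge t,
\]
and its quantile function is $Q_t(s)=Q\bigl(u+(1-u)s\bigr)$ for $s\in(0,1)$. To check this, note that $F_t(x)\ge s$ if and only if $F(x)\ge u+(1-u)s$, and then take infima. Scaling by $1/t$ is a positive linear map, so it commutes with quantiles, and $Q_Y(s)=Q\bigl(u+(1-u)s\bigr)/Q(u)$.

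\textbf{Step 2: change of variables.} Substituting $p=u+(1-u)s$, so that $ds=dp/(1-u)$, gives
\[
\lambda_1^*(u)=\int_0^1\frac{Q(u+(1-u)s)}{Q(u)}\,ds=\frac{1}{(1-u)Q(u)}\int_u^1 Q(p)\,dp .
\]
By Definition~\ref{def:QEPF} this equals $\QEPF$. Finiteness of the integral, and hence existence of $\lambda_1^*$, follows from the standing finite-mean assumption, since $\int_u^1 Q(p)\,dp\le\mu<\infty$. This is consistent with the remark that L-moments exist whenever the mean is finite.

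\textbf{Main obstacle.} The only step needing care is Step 1, namely the claim that the conditional law of $X$ given $X>Q(u)$ has quantile function $Q(u+(1-u)s)$. This uses $F(Q(u))=u$, which holds for a continuous $F$ even where $F$ has flat pieces, because $Q$ is the left-continuous generalized inverse. I would verify the equivalence $F_t(x)\ge s \iff x\ge Q(u+(1-u)s)$ directly from the definition of $Q$.

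As a cross-check, one can avoid quantiles of $Y$ entirely. Write $\mathbb{E}[Y]=\mathbb{E}[X\mathbf 1\{X>t\}]/\{t(1-u)\}$ and use the representation $X\overset{d}{=}Q(U)$ with $U$ uniform on $(0,1)$. Since $\{Q(U)>Q(u)\}=\{U>u\}$ almost surely, this gives $\mathbb{E}[X\mathbf 1\{X>t\}]=\int_u^1 Q(p)\,dp$ and the same conclusion.
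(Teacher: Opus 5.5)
Your proof is correct and follows essentially the same route as the paper: you reduce $\lambda_1^*(u)$ to $\int_0^1 Q_Y(v)\,dv$, identify $Q_Y(v)=Q(u+(1-u)v)/Q(u)$ from the conditional law of $X$ given $X>Q(u)$, and substitute $p=u+(1-u)v$. Your justification of the conditional quantile through the equivalence $F_t(x)\ge s \iff F(x)\ge u+(1-u)s$ is slightly more careful than the paper's step, which simply applies $Q$ to both sides of $F(yQ(u))=u+(1-u)v$.
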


\begin{proof}
First, compute the distribution of $Y$:
\[
P(Y \le y) = P\!\left( \frac{X}{t} \le y \,\bigg|\, X > t \right)
= \frac{P(t < X \le yt)}{P(X > t)}.
\]
Since $P(X > t) = 1 - F(t)$ and $P(X \le s) = F(s)$, we have
\[
P(Y \le y) = \frac{F(yt) - F(t)}{1 - F(t)}.
\]

Now set $t = Q(u)$ so that $F(t) = u$. Then
\[
P(Y \le y) = \frac{F(y Q(u)) - u}{1 - u}.
\]
Let $v = P(Y \le y)$, then
\[
F(y Q(u)) = u + (1-u)v.
\]
Applying the quantile function $Q(\cdot)$ gives
\[
y Q(u) = Q(u + (1-u)v),
\]
so the quantile function of $Y$ is
\[
Q_Y(v) = \frac{Q(u + (1-u)v)}{Q(u)}, \qquad v \in [0,1].
\]

The first L-moment is
\[
\lambda_1^*(u) = \int_0^1 Q_Y(v)\, dv = \int_0^1 \frac{Q(u+(1-u)v)}{Q(u)}\, dv.
\]
Substitute $p = u+(1-u)v$, so $dp = (1-u)\, dv$, giving
\[
\lambda_1^*(u) = \frac{1}{(1-u)Q(u)} \int_u^1 Q(p)\, dp.
\]
Hence, $\lambda_1^*(u) = \QEPF$.
\end{proof}

\begin{remark}
The identity $\lambda_1^*(u) = \QEPF$ provides a rigorous interpretation of persistence as the mean of the scaled tail beyond the threshold $Q(u)$, consistent with the L-moment framework \citep{Hosking1990}. Because L-moments exist whenever the mean exists, the presence of $\lambda_1^*(u)$ guarantees the existence of all higher L-moments for the normalized tail \citep{Hosking1990,HoskingRC1990}.
\end{remark}


\begin{remark}[Biosimilar tail persistence]
\(\QEPF\) measures the average uplift beyond the entry threshold \(Q(u)\) among the top \((1-u)\) responders. This focuses the analysis on durability where it matters most clinically. Because \(\QEPF\) is \emph{scale-invariant}, comparisons across different units or transformations remain transparent; because it is \emph{location-sensitive}, baseline shifts (e.g., due to supportive care or concomitant therapy) show up as reduced persistence, which is informative when assessing real-world comparability.

\smallskip
\noindent The identity $\QEPF=1+\MRQ{u}/Q(u)$ separates the \emph{residual benefit} ($\MRQ{u}$, the average additional benefit beyond $Q(u)$ from the entry level $Q(u)$. If persistence differs, this split shows whether the gap is due to less remaining gain among top responders or to a higher threshold they must exceed. The Lorenz and TTT forms give complementary views: the Lorenz link shows how much of the total benefit is concentrated in the upper tail; the TTT link shows how durability accumulates above \(Q(u)\). Together, these summaries make tail differences easy to read and report, adding a clear, patient-focused perspective to standard mean or median endpoints.


\end{remark}
\section{Stationary Points of \texorpdfstring{$\QEPF$}{\QEPF}}
\label{sec:beta}
\noindent
Although $\QEPF$ usually declines toward $1$ as $u\uparrow 1$, in some settings the curve can flatten or turn, creating local points where it stops rising or falling. These \emph{stationary points} mark a shift in how much strong benefit continues among the best responders. Finding and interpreting them helps show whether a biosimilar maintains the reference therapy's pattern of durable benefit in the high-response tail, complementing mean or median summaries.

\begin{theorem}\label{thm:stationarity}
If the derivative of $\QEPF$, $\QEP'(u^*)=0$ for some $u^*\in(0,1)$, then stationarity of $\QEPF$ at a quantile $u^*$ leads to a simple balance:
\begin{equation}\label{eq:H=1/M+1/Q}
\QEP'(u^*)=0 \;\Longrightarrow\; H(u^*)=\frac{1}{M(u^*)}+\frac{1}{Q(u^*)},
\end{equation}
where $H(u)=\{(1-u)Q'(u)\}^{-1}$ is the hazard-quantile and $M(u)$ is the mean residual quantile. Equivalently,
\[
\QEP(u^*)=H(u^*)\,M(u^*).
\]
\end{theorem}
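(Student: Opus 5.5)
We differentiate the identity \eqref{eq:pID1}, $\QEPF = 1 + M(u)/Q(u)$, which holds for all $u\in(0,1)$. We assume $Q$ is differentiable on $(0,1)$ with $Q'(u)>0$, so that $H(u)=\{(1-u)Q'(u)\}^{-1}$ is well defined and finite. We also assume the mean is finite, so $\int_u^1 Q(p)\,dp<\infty$. The plan has three steps: obtain a closed form for $M'(u)$, differentiate the ratio $M/Q$, and solve $\QEP'(u^*)=0$ for $H(u^*)$.

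\emph{Step 1 (derivative of the mean residual quantile).} Write $(1-u)M(u)=\int_u^1 Q(p)\,dp-(1-u)Q(u)$ and differentiate in $u$:
\[
-M(u)+(1-u)M'(u) = -Q(u)+Q(u)-(1-u)Q'(u) = -(1-u)Q'(u).
\]
This gives $M'(u)=\frac{M(u)}{1-u}-Q'(u)$. Substituting $Q'(u)=\{(1-u)H(u)\}^{-1}$ yields the convenient form
\[
(1-u)M'(u)=M(u)-\frac{1}{H(u)}.
\]
This is the standard relation between $M$ and $H$ in \citep{Nair2013}.

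\emph{Step 2 (derivative of $\QEPF$).} By the quotient rule, $\QEP'(u)=\{M'(u)Q(u)-M(u)Q'(u)\}/Q(u)^2$. Since $Q(u)>0$, stationarity $\QEP'(u^*)=0$ is equivalent to $M'(u^*)Q(u^*)=M(u^*)Q'(u^*)$. Multiply both sides by $(1-u^*)$ and use Step 1 on the left and $(1-u)Q'(u)=1/H(u)$ on the right:
\[
\Big(M(u^*)-\frac{1}{H(u^*)}\Big)Q(u^*)=\frac{M(u^*)}{H(u^*)}.
\]

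\emph{Step 3 (rearrangement).} Multiply through by $H(u^*)$ to get $H(u^*)M(u^*)Q(u^*)=Q(u^*)+M(u^*)$. Dividing by $M(u^*)Q(u^*)$ gives \eqref{eq:H=1/M+1/Q}:
\[
H(u^*)=\frac{1}{M(u^*)}+\frac{1}{Q(u^*)}.
\]
This division is legitimate because $Q(u^*)>0$, and $M(u^*)>0$ for a continuous distribution with strictly increasing $Q$. Dividing instead by $Q(u^*)$ gives $H(u^*)M(u^*)=1+M(u^*)/Q(u^*)$, which equals $\QEP(u^*)$ by \eqref{eq:pID1}. This is the equivalent form stated in the theorem.

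The only delicate point is Step 1. There we must justify differentiating $\int_u^1 Q(p)\,dp$ and $Q$, which requires a finite mean and differentiability of $Q$. We state these as standing regularity assumptions. The rest is algebra.
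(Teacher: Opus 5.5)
Your proof is correct and follows the paper's route. You differentiate $\mathcal{P}(u)=1+M(u)/Q(u)$ by the quotient rule, substitute $Q'(u)=\{(1-u)H(u)\}^{-1}$ and $(1-u)M'(u)=M(u)-H(u)^{-1}$, and rearrange. The only differences are that you derive the $M'(u)$ identity yourself from $(1-u)M(u)=\int_u^1 Q(p)\,dp-(1-u)Q(u)$, which the paper quotes from Nair et al., and that you state the needed regularity and positivity assumptions explicitly.
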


At such a point, persistence equals the product of the instantaneous risk (on the quantile scale) and the expected remaining gain beyond the threshold. This balance offers a clear diagnostic for changes in tail persistence, useful both statistically and clinically.

\begin{proof}\leavevmode
From \eqref{eq:pID1}, we have $\QEP(u)=1+M(u)/Q(u)$.
By differentiating, we get:
\[
\QEP'(u)=\frac{M'(u)Q(u)-Q'(u)M(u)}{Q(u)^2}.
\]
At $u^*$, stationarity $\QEP'(u^*)=0$ gives
$$M'(u^*)=\frac{M(u^*)\,Q'(u^*)}{Q(u^*)}$$. Substituting $Q'(u)=\frac{1}{(1-u)H(u)}$
\begin{equation}\label{eq:stationarity-core}
M'(u^*)=\frac{M(u^*)\,\frac{1}{(1-u^*)H(u^*)}}{Q(u^*)}
\end{equation}
But from \citep{Nair2013}, we have 
\begin{equation}\label{eq:MDash}
M'(u) = \frac{M(u) - H(u)^{-1}}{1-u}.
\end{equation}

Substituting \eqref{eq:MDash} into \eqref{eq:stationarity-core} and re-arranging terms yields

\[
H(u^*) = \frac{1}{M(u^*)} + \frac{1}{Q(u^*)}
\]

\end{proof}

\begin{remark}
At a $\QEPF$ extremum, the hazard-quantile equals the sum of reciprocals of past and future survival scales. This identity reflects a dynamic balance: the instantaneous risk is exactly offset by the combined contributions of accumulated and remaining life. Clinically, it marks a transition from risk-driven to survival-driven prognosis-a stabilization phase often associated with durable therapies and long-term responders.
\end{remark}

\begin{remark}
Equivalently, using $\QEP(u)=1+\dfrac{M(u)}{Q(u)}$, \eqref{eq:H=1/M+1/Q} can be expressed as
\[
\QEP(u^*) = H(u^*)\,M(u^*).
\]
This identity reveals a fundamental balance: at the turning point, the $\QEP(u^*)$ equals the product of the hazard-quantile and the mean-residual span. It unites two perspectives-instantaneous risk (through $H(u^*)$) and accumulated expectation (through $M(u^*)$-into a single, dimensionally consistent relationship.
\end{remark}

\subsection{Example: \texorpdfstring{Beta$(\alpha=\tfrac{1}{2},\beta=1)$}{Beta(alpha=1/2,beta=1)}}
\label{subsec:beta-example}

\noindent
To illustrate the stationarity points, we use a simple Beta$(\tfrac{1}{2},1)$ model because it gives closed-form formulas for the quantile $Q(u)$, the hazard-quantile $H(u)$, and the mean residual quantile $M(u)$. From these, we form $\QEPF$ and see how its shape behaves across upper quantiles. Finally, we use the balance identity at a chosen percentile $u^*$ to solve for a small location shift $a$ that creates a turning point. In a comparative setting (including biosimilar evaluations), this helps check whether a new product keeps the same pattern of durable benefit among top responders.

Consider $X\sim\mathrm{Beta}(\alpha,\beta)$ on $[0,1]$ with $\alpha=\tfrac{1}{2}$ and $\beta=1$. The quantile function and its derivative are
\[
Q(u)=u^2,\qquad q(u)=Q'(u)=2u,\qquad u\in(0,1).
\]

The hazard-quantile is
\[
H(u)=\frac{1}{(1-u)\,q(u)}=\frac{1}{2u(1-u)}.
\]

The mean residual quantile is
\[
M(u)=\frac{1}{1-u}\int_u^1\bigl(Q(p)-Q(u)\bigr)\,dp
=\frac{1+u-2u^2}{3}.
\]

Thus,
\[
\QEP(u)=1+\frac{M(u)}{Q(u)}=\frac{1+u+u^2}{3u^2},\qquad u\in(0,1).
\]

For this Beta quantile, $\QEPF$ is strictly decreasing in $u$, diverging as $u\to 0^+$ and approaching $1$ as $u\to 1^-$. This reflects strong persistence of large responses in the upper tail.
To explore shape changes, consider a location shift:
\[
Q_a(u)=u^2+a,\qquad a>0.
\]
The shifted $\QEPF$ becomes
\[
\QEP_a(u)=1+\frac{M(u)}{Q(u)+a}=1+\frac{\dfrac{1+u-2u^2}{3}}{u^2+a}.
\]

Using the identity
\[
H(u^*)=\frac{1}{M(u^*)}+\frac{1}{Q(u^*)+a},
\]

Solving for a
\[
a=\frac{1}{\,H(u^*)-\dfrac{1}{M(u^*)}\,}-Q(u^*),
\]
the required shifts are:
\[
u^*=0.10:\ a\approx0.35,\qquad
u^*=0.15:\ a\approx0.806,\qquad
u^*=0.20:\ a\approx2.20.
\]

Plot of $\QEPF$ and shifted $\QEP_a(u)$ for Beta$(\tfrac{1}{2},1)$ is given in Figure \ref{fig:qepf-beta-placement-3pts}. At very low percentiles ($u^*=0.10$), only a small positive shift ($a\approx0.35$) is needed to create a turning point. At $u^*=0.15$, the required shift increases to about $a\approx0.806$, reflecting a stronger adjustment as hazard and residual balance begins to change. By $u^*=0.20$, the shift becomes much larger ($a\approx2.20$), because the hazard term declines and the residual span narrows rapidly. Larger $a$ flattens the curve, reducing persistence while preserving hazard and residual structure. This progression illustrates how location shifts can tune $\QEPF$ shape at clinically meaningful percentiles without altering underlying reliability measures.

\begin{figure}[ht]
\centering
\begin{tikzpicture}
\begin{axis}[
  width=0.9\linewidth,
  height=0.65\linewidth,
  xlabel={$u$},
  ylabel={$\QEPF$ and $\QEP_a(u)$},
  xmin=0, xmax=1,
  ymin=0.9, ymax=3.5,
  legend style={
    at={(0.6,0.8)},
    anchor=west,
    font=\small,
    draw=none,
    fill=none,
  },
  grid=both,
  major grid style={gray!20},
  minor grid style={gray!10},
  thick,
  restrict y to domain=0.9:4.0,
  unbounded coords=discard,
]
  \addplot[blue,domain=0.01:1,samples=400]
    {(1 + x + x^2)/(3 * x^2)};
  \addlegendentry{Baseline $P(u)$}

  \addplot[red,domain=0:1,samples=400]
    {(1 + ((1 + x - 2*x^2)/3)/(x^2 + 0.35))};
  \addlegendentry{$\QEP_a(u)$, $a=0.35$}

  \addplot[teal!70!black,domain=0:1,samples=400]
    {(1 + ((1 + x - 2*x^2)/3)/(x^2 + 0.806))};
  \addlegendentry{$\QEP_a(u)$, $a=0.806$}

  \addplot[orange,domain=0:1,samples=400]
    {(1 + ((1 + x - 2*x^2)/3)/(x^2 + 2.20))};
  \addlegendentry{$\QEP_a(u)$, $a=2.20$}

  \addplot[black, dashed, domain=0:1] {1};
  \addlegendentry{$\QEPF=1$ reference}

  \addplot[mark=*,only marks,black]
    coordinates {(0.10, {1 + ((1 + 0.10 - 2*0.01)/3)/(0.01 + 0.35)})};
  \node[anchor=west] at (axis cs:0.10, {1 + ((1 + 0.10 - 2*0.01)/3)/(0.01 + 0.35)}) {\small $u^*=0.10$};

  \addplot[mark=*,only marks,black]
    coordinates {(0.15, {1 + ((1 + 0.15 - 2*0.0225)/3)/(0.0225 + 0.806)})};
  \node[anchor=west] at (axis cs:0.15, {1 + ((1 + 0.15 - 2*0.0225)/3)/(0.0225 + 0.806)}) {\small $u^*=0.15$};

  \addplot[mark=*,only marks,black]
    coordinates {(0.20, {1 + ((1 + 0.20 - 2*0.04)/3)/(0.04 + 2.20)})};
  \node[anchor=west] at (axis cs:0.20, {1 + ((1 + 0.20 - 2*0.04)/3)/(0.04 + 2.20)}) {\small $u^*=0.20$};

\end{axis}
\end{tikzpicture}
\caption{Plot of $\QEPF$ and shifted $\QEP_a(u)$ for Beta$(\tfrac{1}{2},1)$}
\label{fig:qepf-beta-placement-3pts}
\end{figure}

\section{Non-parametric Estimation of \texorpdfstring{$\QEPF$}{QEPF}}
\label{sec:empirical-qepf}

Let $X_1,\dots,X_n$ be a random sample from a distribution with quantile function $Q(u)$.
Denote $X_{(1)}\le\cdots\le X_{(n)}$ be the order statistics. An empirical estimator of $Q(u)$ is (see \citealp{Parzen1979})
\[
Q_n(u)\;=\;X_{(j)} \quad \text{for } \frac{j-1}{n}<u\le\frac{j}{n},\;\; j=1,\dots,n,
\]
equivalently $Q_n(u)=X_{(k)}$ with $k=\lceil nu\rceil$.

For $V(u)=\frac{1}{1-u}\int_u^1 Q(t)\,dt$, its empirical (Riemann-sum) version is
\begin{equation}
\label{eq:Vhat}
V_n(u)
\;\equiv\;
\frac{\displaystyle \frac{1}{n}\sum_{i=k+1}^n Q_n\!\left(\frac{i}{n}\right)}{\,1-\frac{k}{n}\,}
\;=\;
\frac{1}{n-k}\sum_{i=k+1}^n X_{(i)},
\qquad k=\lceil nu\rceil,\;\; k<n.
\end{equation}
Thus \eqref{eq:Vhat} is the right Riemann sum of $Q_n$ over $[k/n,1]$, normalized by the tail length.

Using \eqref{eq:defQEPF}, the empirical estimator for $\QEPF$ can be obtained as

\begin{equation}
\label{eq:QEPFhat}
\widehat{\QEP}_n(u)\;=\;P_n(u)
\;=\;
\frac{ \displaystyle \frac{1}{n-k}\sum_{i=k+1}^{n} X_{(i)} }{ X_{(k)} },
\qquad k<n.
\end{equation}

%

As $n\to\infty$ with $k/n\to u$,
\[
X_{(k)} \xrightarrow{a.s.} Q(u),
\qquad
\frac{1}{n-k}\sum_{i=k+1}^{n} X_{(i)} \xrightarrow{a.s.} \Vitality(u),
\]
so $\widehat{\QEP}_n(u)\xrightarrow{a.s.}\QEPF$.
Moreover, using classical quantile-process and $L$-statistic theory together with a delta method
argument,\footnote{See, e.g., \citep{Bahadur1966,Kiefer1967,vdVW1996,Serfling1980}.}
\[
\sqrt{n}\big\{\widehat{\QEP}_n(u)-\QEPF\big\}\ \rightsquigarrow\ \mathcal{N}\!\big(0,\ \sigma^2_{\QEPF}\big).
\]

For small $n$ and $u$ extremely close to 1, $\widehat{\QEP}_n(u)$ can be unstable; we recommend a trimmed upper interval (e.g., $u\le 0.98$ for small $n$). In practice, we use Bootstrap method to compute variability. To assess the finite sample properties of the estimate, we have conducted a simulation study, and is presented in Section \ref{sec:sim1}.

\section{Two-Sample Equivalence Test based on \texorpdfstring{$\QEPF$}{QEPF}}
\label{sec:2sample_eq_test-qepf}

In comparative effectiveness research, particularly in biosimilar evaluation, regulatory frameworks emphasize demonstrating clinical equivalence to a reference product in terms of efficacy and safety \citep{FDA_Biosim_2015, EMA_Biosim_2014, Chow2013}. Conventional summaries (e.g., means or medians of time-to-event) can obscure differences in the upper tail, precisely where durable benefits among top responders are most clinically relevant.

We propose an \emph{upper-tail equivalence} test based on $\QEPF$, which captures persistence of benefit among the top $(1-u)$ fraction relative to the entry threshold $Q(u)$ \citep{Ando2022, AcerbiTasche2002, RockafellarUryasev2002}. Let $Q_{\text{ref}}(u)$ and $Q_{\text{bio}}(u)$ denote the outcome quantile functions for the reference and biosimilar groups, respectively, and let $\mathcal{P}_{\text{ref}}(u)$ and $\mathcal{P}_{\text{bio}}(u)$ denote their corresponding persistence functions. Fix a trimmed upper-tail interval
\[
\mathcal{U} = [u_1, u_2] \subset (0,1), \qquad 0 < u_1 < u_2 < 1,
\]
bounded away from 0 and 1 to avoid instability in extreme tails.

\subsection{Equivalence Hypotheses and Test Statistic}


For $u \in \mathcal{U}$, we test
\begin{equation}
\label{eq:QEPF_EQ_hypotheses}
\begin{aligned}
H_0:\ & {\mathcal{P}}_{\text{ref}}(u) = {\mathcal{P}}_{\text{bio}}(u)\ \text{for all } u \in \mathcal{U},\\
H_1:\ & {\mathcal{P}}_{\text{ref}}(u) \ne {\mathcal{P}}_{\text{bio}}(u)\ \text{for at least one } u \in \mathcal{U}.
\end{aligned}
\end{equation}
  Let $\widehat{\mathcal{P}}_{\text{ref}}(u)$ and $\widehat{\mathcal{P}}_{\text{bio}}(u)$ denote sample-based estimators of $\mathcal{P}_{\text{ref}}(u)$ and $\mathcal{P}_{\text{bio}}(u)$ (see \Cref{sec:empirical-qepf}). A natural test statistic is the scaled supremum of the estimated difference:
\begin{equation}
\label{eq:TestStat}
\widehat{T}_{\text{EQ}} =
\sqrt{\frac{n_{\text{ref}} \, n_{\text{bio}}}{n_{\text{ref}} + n_{\text{bio}}}}
\ \sup_{u \in \mathcal{U}} \big| \widehat{\mathcal{P}}_{\text{ref}}(u) - \widehat{\mathcal{P}}_{\text{bio}}(u) \big|.
\end{equation}

Here $n_{\text{ref}}$ and $n_{\text{bio}}$ denote the sample sizes in the
reference and biosimilar arms, respectively.

\subsection{Bootstrap Calibration and Decision Rule}


We calibrate the critical value by bootstrapping the \emph{supremum process} over $\mathcal{U}$. In practice, the multiplier or paired bootstrap yields valid approximations for suprema of empirical/quantile-derived functionals.
\begin{enumerate}
\item Generate $B$ bootstrap replicates under the pooled null (or via a multiplier bootstrap over $u\in\mathcal{U}$).
\item For each replicate $b=1,\dots,B$, compute
\[
\widehat{T}_{\text{EQ}}^{(b)}
=
\sqrt{\frac{n_{\text{ref}}\,n_{\text{bio}}}{n_{\text{ref}} + n_{\text{bio}}}}
\ \sup_{u \in \mathcal{U}} \left| \widehat{\mathcal{P}}_{\text{ref}}^{(b)}(u) - \widehat{\mathcal{P}}_{\text{bio}}^{(b)}(u) \right|.
\]
\item Let $c_{\alpha}^{\text{boot}}$ be the bootstrap $(1-\alpha)$ critical value for the margin-adjusted supremum.
\end{enumerate}
\emph{Decision rule:} Reject $H_0$ (declare \emph{equivalence} on $\mathcal{U}$) if
\[
\widehat{T}_{\text{EQ}} \le c_{\alpha}^{\text{boot}}.
\]

\noindent
Because $\QEPF$ typically attenuates toward $1$ as $u\to1$, differences between two $\QEPF$ curves are hard to detect in the extreme upper tail. We therefore recommend capping the right endpoint of $\mathcal{U}$ at $u_2 \le 0.90$.

\section{Simulation Study}
\label{sec:simstudy}

In the following sections, we conducted simulation studies to assess the performance of the empirical estimator and the proposed equivalance test.

\subsection{Empirical Estimator \texorpdfstring{$\widehat{\QEP}(u)$}{QEPFhat}}
\label{sec:sim1}

To assess the finite-sample performance (bias and MSE) of the empirical
estimators defined in \Cref{sec:empirical-qepf} by comparing to the true \texorpdfstring{$\QEPF$}{QEPF}, we consider the distributions, $LMRQD(\alpha{=}0.5,\mu{=}5)$, Weibull$(k{=}2,\lambda{=}1)$, and Pareto$(\alpha{=}2.5,\sigma{=}1)$
(light, medium, heavy tails) at $u\in\{0.90,0.95\}$ and sample-size $n\in\{25,50,100,1000\}$.
For each setting we run $1000$ Monte Carlo replications, compute $\widehat{\QEP}(u)$ 
and report bias and MSE of the estimates of $\widehat{\QEP}(u)$.

\begin{figure}[ht]
\centering
\includegraphics[width=0.99\linewidth]{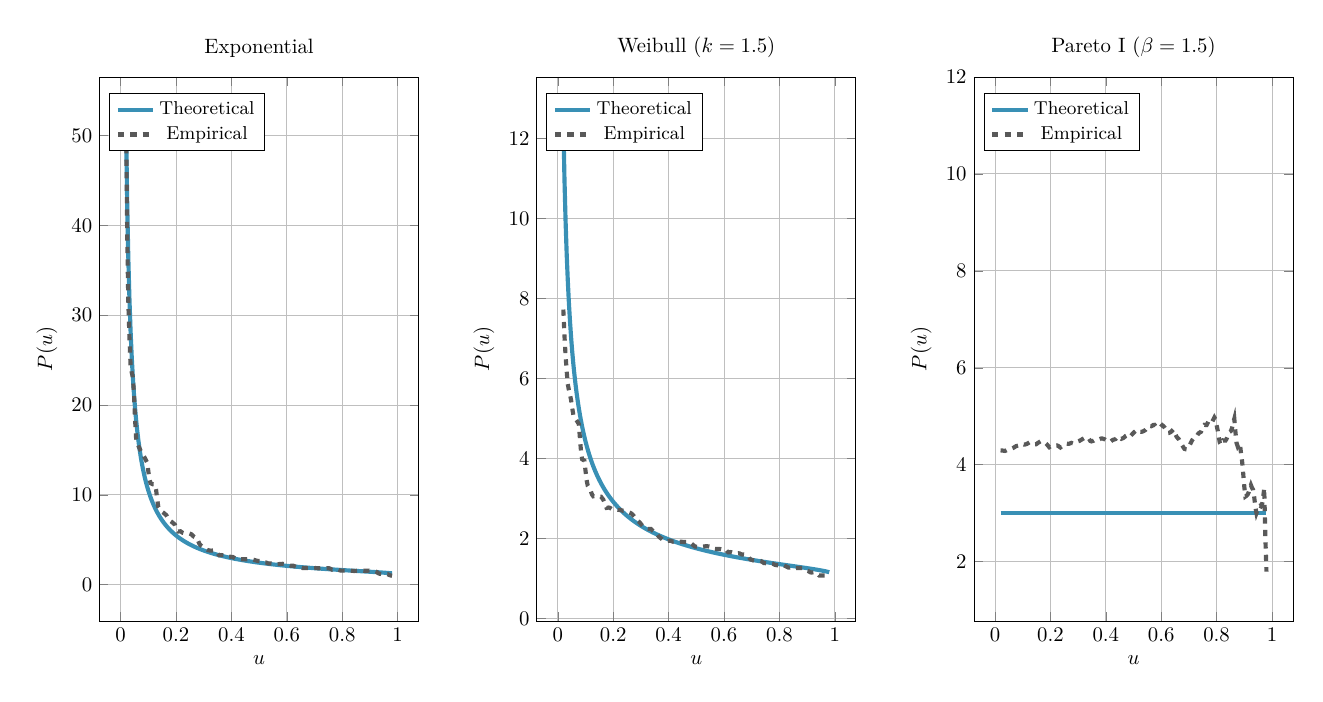}
\caption{Plot of $\widehat{\QEP}(u)$}
\label{fig:qepf-sim-panels}
\end{figure}

From Table~\ref{tab:bias-mse-combined}, for LMRQD and Weibull, bias and MSE decrease rapidly as $n$ increases, with Weibull showing near-zero bias even for moderate $n$. Pareto exhibits much higher variability, especially at $u=0.95$, where MSE remains large for small samples due to heavy-tail behavior. In Figure~\ref{fig:qepf-sim-panels}, we plot $\widehat{\QEP}(u)$ for the mentioned distributions used for simulations.

\begin{table}[ht]
\centering
\caption{Bias and MSE for different distributions at $u\in\{0.85,0.90,0.95\}$ across sample sizes.}
\label{tab:bias-mse-combined}
\begin{tabular}{llrrrr}
\toprule
Distribution & $u$ & $n$ & True & Bias & MSE \\
\midrule
\multirow{12}{*}{LMRQD $(\alpha=0.5,\ \mu=5)$}
& 0.85 & 25   & 1.5660 &  0.0050 & 0.1353 \\
&      & 50   & 1.5660 &  0.0157 & 0.0595 \\
&      & 100  & 1.5660 &  0.0143 & 0.0272 \\
&      & 1000 & 1.5660 &  0.0037 & 0.0027 \\
& 0.90 & 25   & 1.4633 &  0.0494 & 0.1555 \\
&      & 50   & 1.4633 &  0.0379 & 0.0639 \\
&      & 100  & 1.4633 &  0.0085 & 0.0268 \\
&      & 1000 & 1.4633 & -0.0003 & 0.0025 \\
& 0.95 & 25   & 1.3526 &  0.0334 & 0.1519 \\
&      & 50   & 1.3526 &  0.0129 & 0.0696 \\
&      & 100  & 1.3526 &  0.0206 & 0.0296 \\
&      & 1000 & 1.3526 &  0.0036 & 0.0031 \\
\midrule
\multirow{12}{*}{Weibull $(k=2,\ \lambda=1)$}
& 0.85 & 25   & 1.2206 & -0.0002 & 0.0150 \\
&      & 50   & 1.2206 &  0.0035 & 0.0067 \\
&      & 100  & 1.2206 &  0.0082 & 0.0033 \\
&      & 1000 & 1.2206 &  0.0001 & 0.0003 \\
& 0.90 & 25   & 1.1862 &  0.0066 & 0.0186 \\
&      & 50   & 1.1862 &  0.0149 & 0.0079 \\
&      & 100  & 1.1862 &  0.0063 & 0.0034 \\
&      & 1000 & 1.1862 &  0.0003 & 0.0003 \\
& 0.95 & 25   & 1.1472 &  0.0146 & 0.0218 \\
&      & 50   & 1.1472 &  0.0133 & 0.0121 \\
&      & 100  & 1.1472 &  0.0063 & 0.0039 \\
&      & 1000 & 1.1472 &  0.0000 & 0.0004 \\
\midrule
\multirow{12}{*}{Pareto I $(\alpha=2.5,\ \sigma=1)$}
& 0.85 & 25   & 1.6667 & -0.0185 & 0.5188 \\
&      & 50   & 1.6667 & -0.0163 & 0.1961 \\
&      & 100  & 1.6667 & -0.0012 & 0.1178 \\
&      & 1000 & 1.6667 &  0.0022 & 0.0133 \\
& 0.90 & 25   & 1.6667 &  0.0471 & 0.9631 \\
&      & 50   & 1.6667 &  0.0052 & 0.2645 \\
&      & 100  & 1.6667 &  0.0017 & 0.1673 \\
&      & 1000 & 1.6667 &  0.0010 & 0.0189 \\
& 0.95 & 25   & 1.6667 &  0.0367 & 1.4209 \\
&      & 50   & 1.6667 &  0.0918 & 2.0180 \\
&      & 100  & 1.6667 & -0.0281 & 0.2357 \\
&      & 1000 & 1.6667 &  0.0089 & 0.0413 \\
\bottomrule
\end{tabular}
\end{table}




\subsection{Two-Sample Equivalence Test}
\label{sec:sim2}

In this section, we have conducted a simulation study to evaluate the finite-sample performance of the test statistic on a trimmed upper-tail interval $\mathcal{U} = [0.60, 0.90]$. Two independent samples (reference vs biosimilar) were generated under six scenarios: three non-equality cases (\texttt{Gamma vs LMRQD}, \texttt{Gamma vs Lognormal}, \texttt{LMRQD vs Lognormal}) and three equality cases (\texttt{Gamma vs Gamma}, \texttt{LMRQD vs LMRQD}, \texttt{Lognormal vs Lognormal}), with sample sizes per group $n \in \{50, 100, 300, 500\}$. For each replicate, the test statistic was computed. A total of 2000 Monte Carlo replicates were performed for each design point.

In Table~\ref{tab:simBootZ}, results indicate that under non-equality scenarios, power increased substantially with sample size: for example, \texttt{LMRQD vs Lognormal} reached $\approx$ $99.6\%$ at $n = 500$, while \texttt{Gamma vs Lognormal} achieved $\approx 98.2\%$ at the same size. However, under the alternative and smaller samples ($n = 50$), power was modest for some contrasts (e.g., $39.8\%$ for \texttt{LMRQD vs Lognormal} and $64.9\%$ for \texttt{Gamma vs LMRQD}), which we attribute to the tail restriction imposed by $\mathcal{U}$; with fewer observations in the upper quantile range, the supremum statistic becomes less sensitive to differences in persistence.

\begin{table}[ht]
\centering
\scriptsize
\caption{Empirical Power and Size of Bootstrap-Calibrated $\QEPF$ Test ($\mathcal{U}=[0.60,0.90]$, $\alpha=0.05$)}
\label{tab:simBootZ}
\begin{tabular}{lcccc}
\toprule
\textbf{Scenario} & \textbf{$n=50$} & \textbf{$n=100$} & \textbf{$n=300$} & \textbf{$n=500$} \\
\midrule
\multicolumn{5}{l}{\textbf{Non-Equality}} \\
\midrule
Gamma vs LMRQD        & 64.95 & 94.45 & 100.0 & 100.0 \\
Gamma vs Lognormal     & 81.35 & 90.85 & 96.85 & 98.20 \\
LMRQD vs Lognormal     & 39.75 & 71.70 & 98.30 & 99.55 \\
\midrule
\multicolumn{5}{l}{\textbf{Equality}} \\
\midrule
Gamma vs Gamma         & 1.60  & 2.10  & 2.40  & 3.20 \\
LMRQD vs LMRQD         & 1.95  & 2.95  & 3.50  & 2.60 \\
Lognormal vs Lognormal & 1.50  & 1.95  & 2.60  & 2.40 \\
\bottomrule
\end{tabular}
\end{table}

Under equality scenarios, the empirical size was conservative relative to $\alpha = 0.05$, ranging from approximately $1.5\%$ to $3.5\%$, reflecting robustness against false equivalence but suggesting slight over-calibration in finite samples. Overall, the bootstrap-calibrated test demonstrates strong sensitivity to tail differences and reliable type I error control, with performance improving as sample size and bootstrap replicates increase, supporting its practical utility in biosimilar equivalence assessment \citep{FDA_Biosim_2015, EMA_Biosim_2014, Chow2013}.

\section{Real-Data Application: SMARTER Trial}
\label{sec:realdata}

To illustrate the practical utility of the proposed two-sample test based on the quantile-based persistence function, we apply the method to a real-dataset from SMARTER trial. The trial evaluated a village doctor-led mobile health intervention for cardiovascular risk reduction (Intervention group) compared to usual care (Control group) in systolic blood pressure (SBP) \citep{SMARTER_BMJ2025, SMARTER_NCT05645640, Li_2025}. 


\begin{figure}[ht]
\centering
\includegraphics[width=0.9\linewidth, height=0.4\textheight]{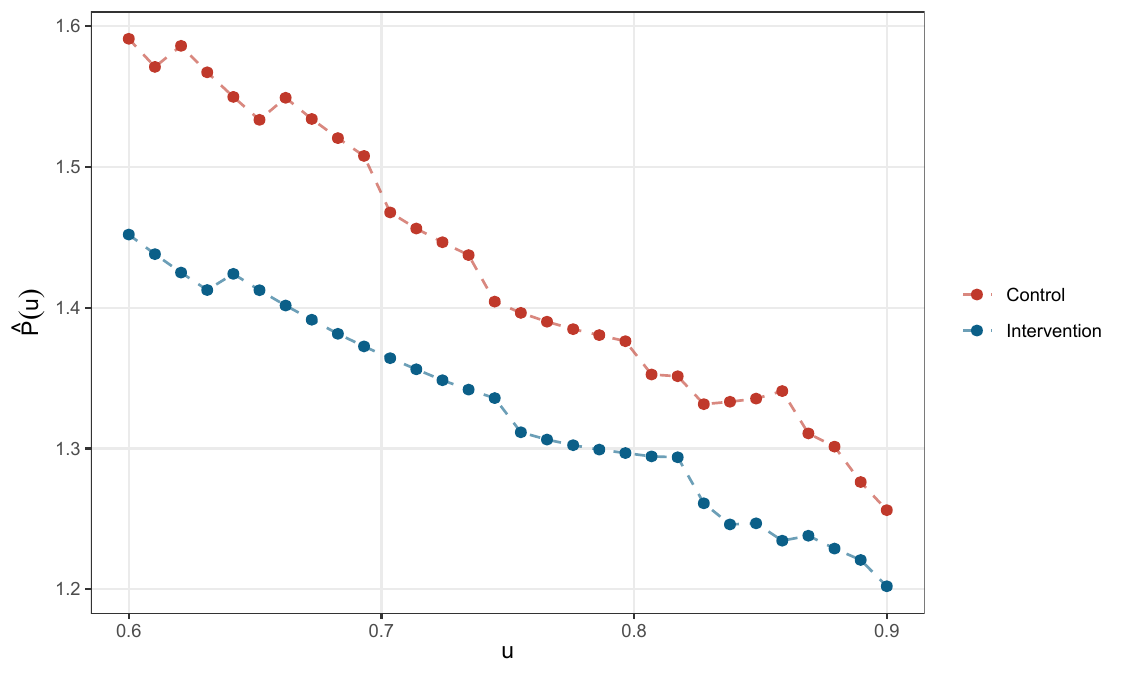}
\caption{$\widehat{\QEP}(u)$ curves for SBP reduction at 12 months over $\mathcal{U}=[0.60,0.90]$.}
\label{fig:smarter_curves}
\end{figure}

We computed $\widehat{\QEP}(u)$ for each group over $\mathcal{U}=[0.60,0.90]$. Figure~\ref{fig:smarter_curves} shows the estimated curves for SBP reduction at 12 months. The intervention group exhibits consistently higher persistence across the trimmed upper tail. Applying the proposed test on $\mathcal{U}=[0.60,0.90]$ yielded $\widehat{T}_{\text{EQ}}=5.03$ with bootstrap p-value $0.0001$, strongly rejecting equality.

\begin{table}[ht]
\centering
\scriptsize
\caption{Sensitivity analysis for $\widehat{\QEP}(u)$ under different upper-tail intervals.}
\label{tab:smarter_sensitivity}
\begin{tabular}{ccccccc}
\toprule
$U_{\text{lower}}$ & $U_{\text{upper}}$ & $\widehat{T}_{\text{EQ}}$ & $n_{\text{eff}}$ & P-value & crit95$_{\text{boot}}$ & Width \\
\midrule
0.10 & 0.90  & 52.00 & 974 & 0.0001 & 29.4 & 0.8 \\
0.15 & 0.90  & 32.30 & 974 & 0      & 16.9 & 0.75 \\
0.20 & 0.90  & 20.90 & 974 & 0.0004 & 10.8 & 0.7 \\
0.25 & 0.90  & 17.20 & 974 & 0      & 8.55 & 0.65 \\
0.30 & 0.90  & 12.80 & 974 & 0      & 6.08 & 0.6 \\
0.35 & 0.90  & 10.30 & 974 & 0      & 5.26 & 0.55 \\
0.40 & 0.90  & 9.66  & 974 & 0      & 4.46 & 0.5 \\
0.45 & 0.90  & 7.68  & 974 & 0      & 4.04 & 0.45 \\
0.50 & 0.90  & 7.36  & 974 & 0      & 3.36 & 0.4 \\
0.55 & 0.90  & 5.18  & 974 & 0.0001 & 2.96 & 0.35 \\
\textbf{0.60} & \textbf{0.90}  & \textbf{5.03}  & \textbf{974} & \textbf{0.0001} & \textbf{2.86} & \textbf{0.3} \\
0.65 & 0.90  & 4.56  & 974 & 0.0002 & 2.60 & 0.25 \\
0.70 & 0.90  & 4.04  & 974 & 0.0001 & 2.29 & 0.2 \\
0.75 & 0.90  & 3.36  & 974 & 0.0005 & 2.19 & 0.15 \\
0.80 & 0.90  & 3.32  & 974 & 0.0003 & 2.07 & 0.1 \\
0.85 & 0.975 & 3.35  & 974 & 0.0018 & 2.38 & 0.125 \\
\bottomrule
\end{tabular}
\end{table}

Sensitivity analysis across multiple intervals confirmed robustness (Table~\ref{tab:smarter_sensitivity}): all p-values were $\leq 0.0004$, and $\widehat{T}_{\text{EQ}}$ ranged from 3.32 to 52.0, with larger intervals producing higher test statistics due to greater tail mass. The primary analysis interval $\mathcal{U}=[0.60,0.90]$ (in bold) yielded a highly significant result, supporting durable benefit among high responders in the intervention group.



The original SMARTER trial reported a mean reduction in systolic blood pressure (SBP) of $7.0$ mmHg (95\% CI: $5.7$ to $8.3$) in the intervention group compared to usual care, alongside a significant decrease in predicted 10-year ASCVD risk \citep{SMARTER_BMJ2025}. Our QEPF-based analysis reveals that, in addition to these average effects, the intervention group exhibits a higher concentration of extreme SBP responders, as evidenced by significantly greater upper-tail persistence. This tail-focused perspective complements the primary endpoint and highlights the durability of benefit among high responders.

\section{Discussion and Conclusion}
\label{sec:discussion-conclusion}

The quantile-based effectiveness persistence function $\QEPF$ provides an interpretable, tail-focused lens on clinical performance: it quantifies the average multiplicative uplift among responders exceeding the entry threshold $Q(u)$. As $u\to1$, $\QEPF$ naturally declines toward $1$, reflecting diminishing tail persistence; conversely, constant $\QEPF$ (e.g., Pareto tails) signals self-similar scaling across upper quantiles.

From a comparative effectiveness standpoint, particularly for biosimilar evaluation, an upper-tail equivalence margin can be pre-specified to encode clinically acceptable differences in persistence among high responders. This tail-sensitive assessment complements mean and or median-based non-inferiority by focusing on durability of favorable outcomes among the top portion of the distribution.

Practical guidance for inference includes working on trimmed upper-tail intervals $\mathcal{U}$ away from extreme endpoints (e.g., $[0.60,0.90]$ in small samples) to improve stability, and adopting cluster-level bootstrap calibration when the design is cluster-randomized, thereby preserving intra-cluster dependence.

This work established core properties of $\QEPF$ and formal connections to classical reliability and inequality tools (Lorenz and TTT transforms), and showed that $\QEPF$ equals the first L-moment of the scaled tail beyond $Q(u)$. We analyzed several common lifetime distributions to illustrate behavior across tail shapes, proposed a simple empirical estimator, and developed a two-sample equivalence test on trimmed tail intervals. Simulation studies demonstrated favorable finite-sample performance, and a real-data application illustrated practical utility in a clinical setting.

Extensions include censoring-aware estimators (e.g., KM-based quantile/vitality), covariate-adjusted tail persistence via quantile regression, robust approaches for heavy-tailed data (including trimming and regularization), and formal power analyses for tail-equivalence testing under various distributional regimes. These developments can further integrate $\QEPF$ into regulatory-quality workflows where sustained benefit among high responders is of primary clinical relevance.

\section*{Declarations}

\begin{itemize}
\item \textbf{Funding:} The authors received no financial support for the research, authorship, and/or publication of this article.
\item \textbf{Conflict of interest:} The authors declare that they have no conflict interests.
\item \textbf{Author contributions:} All authors contributed to the study conception and design. Material preparation and analysis were performed by the authors. The first draft of the manuscript was written by the authors, and all authors commented on previous versions of the manuscript. All authors read and approved the final manuscript.
\end{itemize}

\bibliography{qepf_ref}

\end{document}